\documentclass[journal]{IEEEtran}
\usepackage{graphicx}
\usepackage{glossaries}
\usepackage{amsthm,amsmath,amssymb,amsfonts}
\usepackage{tikz, circuitikz,tikzscale}
\usepackage{array}
\usepackage{cite}

\newtheorem{proposition}{Proposition}

\newacronym{BS}{BS}{base station}
\newacronym[plural=UEs,
            longplural={user equipment}]{UE}{UE}{user equipment}
\newacronym{UL}{UL}{uplink}
\newacronym{DL}{DL}{downlink}
\newacronym{NOMA}{NOMA}{non-orthogonal multiple access}
\newacronym{CAPA}{CAPA}{continuous aperture array}
\newacronym{OMA}{OMA}{orthogonal multiple access}
\newacronym{LoS}{LoS}{line-of-sight}
\newacronym{SIC}{SIC}{successive interference cancellation}
\newacronym{BER}{BER}{bit error rate}
\newacronym{AWGN}{AWGN}{additive white Gaussian noise}
\newacronym{CF}{CF}{characteristic function}
\newacronym{PD}{PD}{power-domain}
\newacronym{PA}{PA}{power allocation}
\newacronym{MIMO}{MIMO}{multiple-input-multiple-output}
\newacronym{LIS}{LIS}{large intelligent surface}
\newacronym{SIM}{SIM}{stacked intelligent metasurface}
\newacronym{CRIS}{CRIS}{continuous RIS}
\newacronym{HRIS}{HRIS}{holographic reconfigurable intelligent surface}
\newacronym{QAM}{QAM}{quadrature amplitude modulation}
\newacronym{IUI}{IUI}{inter-user interference}
\newacronym{SNR}{SNR}{signal-to-noise ratio}
\newacronym{DRIS}{DRIS}{discrete RIS}

\begin{document}
\title{BER Analysis and Optimization for Continuous RIS-Enabled NOMA}
\author{ Mahmoud AlaaEldin, \textit{Member}, \textit{IEEE}, Amy S. Inwood, \textit{Member}, \textit{IEEE}, \\ Peter J. Smith, \textit{Fellow}, \textit{IEEE}, and Michail Matthaiou, \textit{Fellow}, \textit{IEEE}  \vspace{-0.3in} \\ 

\thanks{This work was supported by the U.K. Engineering and Physical Sciences Research Council (EPSRC) grant (EP/X04047X/2) for TITAN Telecoms Hub. The work of P. J. Smith was supported by the Marsden Fund Council from New Zealand Government funding, managed by Royal Society Te Apārangi. The work of M. Matthaiou was supported by the European Research Council (ERC) under the European Union’s Horizon 2020 research and innovation programme (grant agreement No. 101001331).

M. AlaaEldin, A. S. Inwood, and M. Matthaiou are with the Centre for Wireless Innovation (CWI), Queen’s University Belfast, Belfast BT3 9DT, U.K. (e-mail: \{m.alaaeldin, a.inwood, m.matthaiou\}@qub.ac.uk).

P. J. Smith is with the School of Mathematics and Statistics, Victoria University of Wellington, Wellington, New Zealand (e-mail: peter.smith@vuw.ac.nz).}

}
\maketitle

\begin{abstract}
This letter investigates a novel uplink (UL) system that integrates power-domain non-orthogonal multiple access (PD-NOMA) with a continuous reconfigurable intelligent surface (CRIS). We analyze the effective CRIS-assisted channels under spatially correlated fading to accurately approximate the characteristic function of the cascaded channel. This allows the derivation of an expression for the bit error rate (BER), a key performance metric for UL PD-NOMA. We further utilize the derived BER expressions to introduce a joint optimization framework that minimizes the average BER via UL power allocation and dynamic RIS partitioning among the users. The analytical results are validated by simulations, and show that the proposed optimization scheme eliminates the BER floors that are associated with UL NOMA. The results also confirm the superiority of the optimized CRIS-NOMA scheme over conventional orthogonal multiple access (OMA) and non-optimized UL NOMA schemes.
\end{abstract}

\begin{IEEEkeywords}
Continuous RIS, joint power and RIS area optimization,  statistical analysis, uplink NOMA.
\end{IEEEkeywords}

\section{Introduction}

The demand for mobile connectivity continues to grow rapidly, driven by increasingly data-intensive applications, such as high-definition video streaming, augmented reality, and artificial intelligence. However, mobile network resources, such as spectrum and power, are fundamentally limited. To meet this growing demand, it is essential to improve the efficiency with which existing resources are utilized.

One promising technology proposed to address this challenge is \gls{NOMA} \cite{9154358}, which allows multiple \glspl{UE} to simultaneously share the same resource block. In contrast, traditional \gls{OMA} schemes divide the available resources among \glspl{UE}, thereby reducing the portion each \gls{UE} can access.

Continuous aperture surfaces, such as \glspl{HRIS} \cite{chrysologou_outage_2022, chrysologou_when_2023,vo_holographic_2024} and \glspl{CAPA} \cite{liu_capa_2025,hou_transmission_2025}, have significant potential to enhance the performance of \gls{NOMA} systems with minimal power consumption. By manipulating the propagation environment, these surfaces can be employed to strengthen the \gls{PD} separation between \glspl{UE}, improving the overall efficiency of \gls{PD}-\gls{NOMA}.
Continuous aperture surfaces have demonstrated superior performance compared to discrete counterparts \cite{deng_reconfigurable_2022}, offering enhanced beam steering precision and improved utilization of the full physical aperture for energy capture.
Existing work on the symbiosis of continuous aperture surfaces and \gls{NOMA} focuses mainly on \gls{DL} \gls{HRIS}-enabled \gls{NOMA} systems. For this scenario,  \cite{chrysologou_outage_2022} and \cite{chrysologou_when_2023} derived the outage probability for perfect and imperfect \gls{SIC}, respectively, while  \cite{vo_holographic_2024} explored the block error rate. 

To date, there have been no studies on the \gls{UL} or symbol-level performance of continuous aperture surface-aided \gls{NOMA} systems. Therefore, this letter presents the first results on both topics. We build upon the foundational work in \cite{alaaeldin3}, which studied the \gls{BER} of an \gls{UL} \gls{NOMA} system with a \gls{DRIS} under uncorrelated channel conditions. Here, we significantly extend this work to a \gls{NOMA} scenario involving a \gls{CRIS} \cite{inwood_continuous_2025} and the more realistic case of spatially correlated channels. \glspl{CRIS} differ slightly from \glspl{HRIS} and \glspl{DRIS}: a \gls{DRIS} consists of individually controllable spatially separated elements that form a finite physical aperture, an \gls{HRIS} consists of densely packed, electromagnetically coupled elements that form a spatially continuous aperture \cite{chrysologou_when_2023}, whereas a \gls{CRIS} assumes a truly continuous distribution across the aperture \cite{inwood_continuous_2025}. The benefits of studying \glspl{CRIS} are twofold: the derived results serve both as a model for truly continuous surfaces and as an upper bound on the performance of discrete RISs when the element density increases. Considering spatially correlated channels is particularly important for \gls{CRIS} systems, as the infinite element density of a \gls{CRIS} implies that many pairs of points lie within the channel's decorrelation distance. 

The main contributions of this work are as follows:
\begin{itemize}
    \item The received signal of the partitioned \gls{CRIS}  \gls{NOMA} system is expressed in terms of effective channel components, for which we derive the first and second moments.
    
    \item Using these results, we derive the cascaded channel \gls{CF} required to compute the \gls{BER}, a critical metric for \gls{UL} \gls{PD}-\gls{NOMA} due to the error floor.
    
    \item We propose a joint optimization method for \gls{UL} \gls{PA} and \gls{CRIS} partitioning to minimize \gls{BER}.
\end{itemize}

\textit{Notation}: The statistical expectation is denoted $\mathbb{E}[\cdot]$; $\mathrm{Var}[\cdot]$ is the variance; $\mathcal{CN}(\mu,\sigma^2)$ denotes a complex Gaussian distribution with mean $\mu$ and variance $\sigma^2$; $\Re{z}$ and $\Im{z}$ are the real and imaginary components of a complex number $z$, respectively; $(\cdot)^*$ represents the complex conjugate operator; $\angle{z}$ is the angle of a complex number $z$; ${}_2F_1(\cdot)$ is the Gaussian hypergeometric function.

\section{System Model}  \label{sysmod}
We consider the \gls{UL} system in Fig.~\ref{fig:sys_model}, where $K$ single-antenna \glspl{UE}, $U_1$ to $U_K$, communicate with a single-antenna \gls{BS} via a passive, rectangular \gls{CRIS} of width $W$ and height $H$, upon which every point can effect the desired phase shift.\footnote{While physically realizing a perfect \gls{CRIS} is challenging, hardware designs have recently been proposed to closely implement a continuous aperture surface \cite{liu_capa_2025}. This work provides a \gls{CRIS} performance upper bound.} We assume that the direct path between the \glspl{UE} and \gls{BS} is fully obstructed, as commonly assumed in the RIS literature \cite{chrysologou_outage_2022, chrysologou_when_2023, vo_holographic_2024, alaaeldin3}. The \gls{CRIS} is partitioned into $K$ vertical sections, $\mathcal{P}_1$ through $\mathcal{P}_K$. Each partition $\mathcal{P}_k$ spans the full height $H$ and has width $W_k$. The phase shifts of all points $ (x_k, y_k) \in \mathcal{P}_k$ are designed to enhance the channel of $U_k$.

\begin{figure}[!ht]
\vspace{-0.075in}
\centering
\resizebox{\columnwidth}{!}{
\begin{circuitikz}
\tikzstyle{every node}=[font=\LARGE]
\draw [ line width=1pt ] (-11,17.25) rectangle (2.5,12);
\draw [ fill={rgb,255:red,245; green,0; blue,0} , line width=1pt ] (-10.5,16.75) rectangle (-5.75,12.5);
\draw [ fill={rgb,255:red,0; green,194; blue,23} , line width=1pt ] (-2.75,16.75) rectangle (2,12.5);
\node [font=\Huge] at (-4.25,14) {...};
\draw [ line width=1pt](-15.75,13.25) to[short] (-15.75,8.75);
\node[font=\Huge,scale=1.5] at (-15.75,8) {BS};
\draw [ fill={rgb,255:red,0; green,0; blue,0} ] (-15.75,13.5) ellipse (0.25cm and 0.25cm);
\draw [ fill={rgb,255:red,0; green,0; blue,0} ] (-8.25,13.75) ellipse (0.25cm and 0.25cm);
\draw [ fill={rgb,255:red,0; green,0; blue,0} ] (-0.25,13.75) ellipse (0.25cm and 0.25cm);
\node [font=\Huge,scale=1.5] at (-8.25,15.75) {$\mathcal{P}_1$};
\node [font=\Huge,scale=1.25] at (-8.25,14.75) {$(x_1,y_1)$};
\node [font=\Huge,scale=1.5] at (-0.25,15.75) {$\mathcal{P}_K$};
\node [font=\Huge,scale=1.25] at (-0.25,14.75) {$(x_K,y_K)$};
\draw [line width=1pt, <->, >=Stealth] (-10.5,17.75) -- (-5.75,17.75);
\node [scale=1.5,font=\Huge] at (-8.125,18.3) {$W_1$};
\draw [line width=1pt, <->, >=Stealth] (-2.75,17.75) -- (2,17.75);
\node [scale=1.5,font=\Huge] at (-0.375,18.3) {$W_K$};
\draw [line width=1pt, <->, >=Stealth] (3,16.75) -- (3,12.5);
\node [scale=1.5,font=\Huge] at (3.55,14.625) {$H$};
\draw [line width=1pt, ->, >=Stealth] (-8.75,13.75) -- (-15,13.5);
\node[font=\Huge,scale=1.25] at (-12.75,14.25) {$g_1(x_1,y_1)$};
\node[font=\Huge,scale=1.25] at (-12.9,12.25) {$g_K(x_K,y_K)$};
\draw [line width=1pt, ->, >=Stealth] (-0.75,13.75) -- (-15,12.75);
\draw [ fill={rgb,255:red,0; green,0; blue,0} , rounded corners = 3.0] (2.5,11.25) rectangle (4,8.75);
\draw [ fill={rgb,255:red,254; green,255; blue,255} , rounded corners = 3.0] (2.75,11) rectangle (3.75,9);
\node[font=\Huge,scale=1.5] at (3.25,8) {UE $K$};
\draw [ fill={rgb,255:red,0; green,0; blue,0} ] (2.5,11.5) rectangle (2.75,11);
\draw [ fill={rgb,255:red,0; green,0; blue,0} , rounded corners = 3.0] (-4.25,11.25) rectangle (-2.75,8.75);
\draw [ fill={rgb,255:red,254; green,255; blue,255} , rounded corners = 3.0] (-4,11) rectangle (-3,9);
\node[font=\Huge,scale=1.5] at (-3.5,8) {UE 1};
\draw [ fill={rgb,255:red,0; green,0; blue,0} ] (-4.25,11.5) rectangle (-4,11);
\node [font=\Huge] at (-0.25,8.75) {...};
\draw [ fill={rgb,255:red,0; green,0; blue,0} ] (-11,11.5) rectangle (-9.25,8.75);
\draw [ fill={rgb,255:red,0; green,0; blue,0} ] (-9,10.75) rectangle (-7.25,8.75);
\draw [ fill={rgb,255:red,254; green,255; blue,255} ] (-10.75,11.25) rectangle (-10.25,10.75);
\draw [ fill={rgb,255:red,254; green,255; blue,255} ] (-10,11.25) rectangle (-9.5,10.75);
\draw [ fill={rgb,255:red,254; green,255; blue,255} ] (-10.75,10.5) rectangle (-10.25,10);
\draw [ fill={rgb,255:red,254; green,255; blue,255} ] (-10,10.5) rectangle (-9.5,10);
\draw [ fill={rgb,255:red,254; green,255; blue,255} ] (-10.75,9.75) rectangle (-10.25,9.25);
\draw [ fill={rgb,255:red,254; green,255; blue,255} ] (-10,9.75) rectangle (-9.5,9.25);
\draw [ fill={rgb,255:red,254; green,255; blue,255} ] (-8.75,10.5) rectangle (-8.25,10);
\draw [ fill={rgb,255:red,254; green,255; blue,255} ] (-8,10.5) rectangle (-7.5,10);
\draw [ fill={rgb,255:red,254; green,255; blue,255} ] (-8.75,9.75) rectangle (-8.25,9.25);
\draw [ fill={rgb,255:red,254; green,255; blue,255} ] (-8,9.75) rectangle (-7.5,9.25);
\draw [line width=1pt, short] (-15.75,13.5) -- (-14.5,8.75);
\draw [line width=1pt, short] (-15.75,13.5) -- (-17,8.75);
\draw [line width=1pt, short] (-17,8.75) -- (-14.75,9.75);
\draw [line width=1pt, short] (-14.5,8.75) -- (-16.75,9.75);
\draw [short] (-16.25,8) -- (-16.25,8);
\draw [line width=1pt, short] (-16.75,9.75) -- (-15,10.5);
\draw [line width=1pt, short] (-14.75,9.75) -- (-16.5,10.5);
\draw [line width=1pt, short] (-16.5,10.5) -- (-15.25,11.5);
\draw [line width=1pt, short] (-15,10.5) -- (-16.25,11.5);
\draw [line width=1pt, short] (-16.25,11.5) -- (-15.5,12.5);
\draw [line width=1pt, short] (-15.25,11.5) -- (-16,12.5);
\draw [line width=1pt, ->, >=Stealth] (-3.6,11.5) -- (-7.75,13.5);
\draw [line width=1pt, ->, >=Stealth] (3.5,11.5) -- (0.25,13.5);
\draw [line width=1pt, ->, >=Stealth, dashed] (3.25,11.5) -- (-7.5,13.75);
\draw [line width=1pt, ->, >=Stealth, dashed] (-3.4,11.5) -- (-0.25,13.25);
\node[color={rgb,255:red,245; green,0; blue,0},font=\Huge,scale=1.25] at (-6.25,11.35) {$h_{11}(x_1,y_1)$};
\node[color={rgb,255:red,245; green,0; blue,0},font=\Huge,scale=1.25] at (-0.65,10) {$h_{1K}(x_K,y_K)$};
\node[color={rgb,255:red,0; green,194; blue,23},font=\Huge,scale=1.25] at (0.5,11) {$h_{K1}(x_1,y_1)$};
\node[color={rgb,255:red,0; green,194; blue,23},font=\Huge,scale=1.25] at (5.25,12.2) {$h_{KK}(x_K,y_K)$};
\end{circuitikz}}
\vspace{-0.2in}
\caption{A CRIS-enabled UL NOMA system.}
\label{fig:sys_model}
\end{figure}
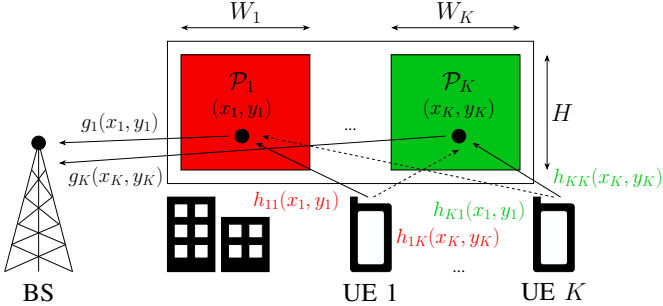

Let $h_{ki}(x_i,y_i)\in\mathbb{C}$ be the normalized channel from $U_k$ to position $(x_i,y_i)\in\mathcal{P}_i$ on the \gls{CRIS}, and $g_i(x_i,y_i)\in\mathbb{C}$ be the normalized channel from $(x_i,y_i)\in\mathcal{P}_i$ to the \gls{BS}. We assume spatially correlated Rayleigh fading so that channels corresponding to any two points $(x_i,y_i)$ and $(x_k,y_k)$ on the \gls{CRIS} are both $\mathcal{CN}(0,1)$ with correlation $\rho(x_i,y_i,x_k,y_k)$. We naturally assume that all \gls{UE}-\gls{CRIS} channels are statistically independent of all \gls{CRIS}-\gls{BS} channels. Let $\Phi_k(x_k,y_k)$ be the \gls{CRIS} reflection coefficient at $(x_k,y_k)\in\mathcal{P}_k$. The phase shifts of $\mathcal{P}_k$ are set to align the channels for $U_k$, such that\footnote{This closed-form expression for the phase shifts of the \gls{CRIS} reduces the complexity by significantly simplifying design and optimization.}
\begin{equation} \label{phase_adjust}
\Phi_k(x_k,y_k) = \mathrm{e}^{-j\left(\angle h_{kk}(x_k,y_k) + \angle g_k(x_k,y_k)\right)}.
\end{equation}
This results in the coherent combining of received signal components for $U_k$ at the destination. The \gls{BS} estimates the cascaded \gls{CRIS} channels of $U_k$ through its designated partition $\mathcal{P}_k$, $h_{kk} g_k$, which it uses to adjust the \gls{CRIS} phase shifts as in (\ref{phase_adjust}) \cite{chrysologou_when_2023, chrysologou_outage_2022}. The angle of the cascaded channel is the sum of the individual channel angles. The signal reflections of $U_k$ via partition $\mathcal{P}_i, \forall i\neq k$ are residual components with zero mean, as the phase shifts of $\mathcal{P}_i$ are not optimized for $U_k$.

The received superimposed \gls{NOMA} signal at the \gls{BS} is thus
\begin{multline} \label{rx_sig}
y = \sum\nolimits_{k=1}^K \sum\nolimits_{i=1}^K \sqrt{\frac{P_k \eta_k}{\nu_k}} \int\nolimits_{x_i=0}^{W_i} \int\nolimits_{y_i=0}^H h_{ki}(x_i,y_i) \\ \times \Phi_i(x_i,y_i)g_i(x_i,y_i)dx_idy_i\,s_k + n,
\end{multline}
where $P_k$ is the transmit power of $U_k$, $s_k$ is the transmitted modulated symbol, $\nu_k$ is a scaling factor that normalizes $s_k$, $n\sim \mathcal{CN}(0, 2\sigma_n^2)$ is the additive white Gaussian noise, $\eta_k= d_{\mathrm{ur},k}^{-\psi} d_{\mathrm{rb}}^{-\psi}$ is the total path loss of the cascaded channel, $d_{\mathrm{ur},k}$ is the distance of the $U_k$-\gls{CRIS} link, $d_{\mathrm{rb}}$ is the distance of the \gls{CRIS}-\gls{BS} link, and $\psi$ is the path-loss exponent.

The modulation symbol, $s_k\in\mathbb{C}$, is drawn from a square \gls{QAM} alphabet, $\mathcal{S}_k$, where the cardinality of $\mathcal{S}_k$ is $M_k$, the modulation order of $U_k$. Both the real and imaginary components of $s_k$ can take values from the set $\{\pm 1, \pm 3, \dots,\pm\sqrt{M_k}-1\}$. The scaling factor $\nu_k=2\left({M_k-1}\right)/3$ is used to normalize $s_k$ to unity.

The superimposed received signal in \eqref{rx_sig} can be rewritten as
\begin{equation}  \label{rx_sig_modified}
y = \sum\nolimits_{k=1}^K \Big( \gamma_{kk} + \sum\nolimits_{i \neq k} \gamma_{ki} \Big) s_k + n,
\end{equation} 
where $\gamma_{kk}$ is the optimized component of the effective channel of $U_k$, such that
\begin{equation}
    \gamma_{kk}\!=\!\sqrt{ \tfrac{P_k \eta_k}{\nu_k} } \!\!\int
    _{\substack{(x_k,y_k) \\ \in\mathcal{P}_k}} \!\left|h_{kk}(x_k,y_k)\right|\!\left|g_{k}(x_k,y_k)\right|dx_kdy_k,\! \label{eq:gammakk}
\end{equation}
while $\gamma_{ki}$ is the random component of the effective channel of $U_k$, such that
\begin{equation}  \label{gamma_ki}
    \!\!\gamma_{ki}\!=\!\!\sqrt{\!\tfrac{P_k \eta_k}{\nu_k} } \!\!\int
    _{\substack{(\!x_i,y_i\!) \\ \in\mathcal{P}_i}}\!\!h_{ki}(\!x_i,y_i\!) \mathrm{e}^{-j\!\angle h_{ii}(\!x_i,y_i\!)}\!\left|g_i(\!x_{i\!},y_{i\!})\right|\!dx_{i}dy_{i}.\!
\end{equation}
The scalar value $\gamma_{ki}$ represents the sum of the cascaded channels from $U_k$ to the \gls{BS} through all points in partition $\mathcal{P}_i, \forall i \neq k$, which can be estimated using a single pilot signal.\footnote{Channel estimation methods for continuous surfaces are fast developing in the literature via approaches, such as reconstruction from discrete samples \cite{fas_samp} and Gaussian process regression \cite{fas_reconst}.} This estimation introduces negligible pilot overhead compared to estimating all individual cascaded channels of $U_k$ through all \gls{CRIS} points in partition $\mathcal{P}_k$, whose pilot overhead scales linearly with the surface area of $P_k$.

\section{Analysis of the RIS effective channels} \label{stat_model}

In this section, we analyze the \gls{CRIS} effective channels and derive terms required to obtain expressions for the \gls{BER}. To facilitate the \gls{BER} derivation, channel alignment is performed to align the phases of the effective channels of all users so that the received superimposed \gls{NOMA} signal has a \gls{QAM}-like constellation shape. This alignment is achieved by applying a phase shift of $\beta_i$ to all points in partition $\mathcal{P}_i$, such that they cancel the imaginary components of all effective channels as
\begin{equation}  \label{align}
\Im \Big( \sum\nolimits_{i=1}^K \gamma_{ki} e^{j \beta_i} \Big) = 0, \quad \! \forall k \in \{1, \dots, K\}. 
\end{equation}
To solve (\ref{align}), the system of equations is transformed to a least-squares unconstrained minimization problem, as
\begin{equation} \label{least_sq}
\min_{\beta_1, \dots, \beta_K} \quad \sum\nolimits_{k=1}^K \Big( \Im \Big( \sum\nolimits_{i=1}^K \gamma_{ki} e^{j \beta_i} \Big) \Big)^2.
\end{equation}
Problem (\ref{least_sq}) can be solved by a standard nonlinear optimization solver, such as the trust-region or Levenberg–Marquardt algorithms. The solution always converges to a stationary minimal point where the value of the objective function is zero, as $\gamma_{kk} \gg \gamma_{ki}$, so $\gamma_{kk}$ dominates the random components for small $\beta_i$. Thus, the aligned effective channel of $U_k$ is
\begin{equation} \label{real_approx}
h_k^{\mathrm{eff}} = \sum\nolimits_{i=1}^K \gamma_{ki} e^{j \beta_i} \approx  \gamma_{kk} + \sum\nolimits_{ i=1, i \neq k }^K \Re(\gamma_{ki}), \quad \forall k,
\end{equation}
and the received signal becomes $y = \sum\nolimits_{k=1}^K h_k^{\mathrm{eff}} x_k + n$. As $\gamma_{kk}$ and $\gamma_{ki}$ are effectively uncorrelated\footnote{As $\gamma_{kk}$ and $\gamma_{ki}$ are obtained by integrating over disjoint partitions, they become asymptotically uncorrelated. As $W_k$ grows, inter-partition distances increase and the 
remaining cross-correlations average out to negligible values.}, the \gls{CF} of $h^{\mathrm{eff}}_k$ is
\begin{equation} \label{cf_eff_ch}
\Phi_{h^{\mathrm{eff}}_{k}}(z) = \Phi_{\gamma_{kk}}(z) \prod\nolimits_{i\neq k} \Phi_{\Re(\gamma_{ki})}(z).  
\end{equation}
The exact distributions of $\gamma_{kk}$ and $\Re(\gamma_{ki})$ are intractable. We instead use the highly accurate gamma approximation for $\gamma_{kk}$ and Gaussian approximation for $\Re(\gamma_{ki})$ \cite{alaaeldin3,inwood_continuous_2025}, giving 
\begin{equation}
    \Phi_{\gamma_{kk}}(z) \approx \tfrac{1}{(1 - j \theta_k z)^{\alpha_k}}, \label{eq:CFgammakk}
\end{equation}
\begin{equation}
    \Phi_{\Re(\gamma_{ki})}(z)\approx \mathrm{exp}\!\Big(\!j\Re(\mathbb{E}[\gamma_{ki}])z+\tfrac{\mathrm{Var}[\Re(\gamma_{ki})]z^2}{2}\!\Big)\!,  \label{eq:CFgammaki} 
\end{equation}
where $\theta_k = \frac{\mathrm{Var}[\gamma_{kk}]}{\mathbb{E}[\gamma_{kk}]}$ is the gamma scale parameter and $\alpha_k = \theta_k \mathbb{E}[\gamma_{kk}]$ is the gamma shape parameter. Therefore, the first and second moments of $\gamma_{kk}$ and $\Re(\gamma_{ki})$ are required.

\subsection{First moments}
Adapting \cite[Eq. (23)]{inwood_continuous_2025}, the first moment of $\gamma_{kk}$ is
\begin{align}
    \mathbb{E}[\gamma_{kk}] &{=} \sqrt{\!\tfrac{P_k\eta_k}{\nu_k}}\!\!\!\int
    _{\substack{(x_i,y_i) \\ \in\mathcal{P}_i}}\!\!\mathbb{E}[|h_{kk}(x_k,y_k)|]\mathbb{E}[|g_k(x_k,y_k)|]dx_kdy_k,\notag \\
    &{=}\sqrt{\!\tfrac{P_k\eta_k}{\nu_k}}\frac{\pi}{4}W_kH. \label{eq:Egammakk}
\end{align}
Since $\mathbb{E}[h_{ki}(x_k,y_k)]=0$, $\mathbb{E}[\gamma_{ki}] =0$.

\subsection{Second moments}
The second moment of $\gamma_{kk}$ is
\begin{equation*}
    \mathrm{Var}[\gamma_{kk}] = \mathbb{E}[\gamma_{kk}^2]-\mathbb{E}[\gamma_{kk}]^2. 
\end{equation*}
As $\Re(\gamma_{ki})$ is zero-mean, and as both the real and imaginary components of $\gamma_{ki}$ are identically distributed, we have 
\begin{equation*}
     \mathrm{Var}[\Re(\gamma_{ki})] = \tfrac{1}{2}\mathbb{E}[|\gamma_{ki}|^2].
\end{equation*}
The expression for $\mathbb{E}[\gamma_{kk}]$ was stated in \eqref{eq:Egammakk}, and the expressions for $\mathbb{E}[|\gamma_{kk}|^2]$ and $\mathbb{E}[|\gamma_{ki}|^2]$ are given below. 

\begin{proposition} \label{prop}
   Assuming isotropic correlation, $\mathbb{E}\left[\gamma_{kk}^2\right] = \Omega(k)$  and $\mathbb{E}\left[|\gamma_{ki}|^2\right] = \Omega(i)$,
    with $\Omega(x)$ given in \eqref{eq:Omegax} and 
\begin{figure*}[t]
\footnotesize
\begin{multline}
\label{eq:Omegax}
     \Omega(x)\!=\!\frac{P_x\eta_x\pi^2}{4\nu_x}\!\bigg(\!\int_0^H\!\!\!r_xg_x(r_x)\!\left[W_xH\frac{\pi}{2}\!-\!(W_x\!+\!H)r_x\!+\!\frac{r_x^2}{2}\right]\!dr_x+\!\int_H^{W_x}\!\!\!r_xg_x(r_x) \!\bigg[W_xH\sin^{-1}\!\left(\!\frac{H}{r_x}\!\right) \!+W_x\sqrt{r_x^2\!-\!H^2}-Wr_x-\frac{H^2}{2}\bigg]dr_x\\-\int_H^{\sqrt{W_x^2+H^2}}\!\!\!r_xg_x(r_x)\!\left[W_xH\!\left(\!\cos^{-1}\!\left(\!\frac{W_x}{r_x}\!\!\right)\!-\!\sin^{-1}\!\!\left(\!\frac{H}{r_x}\!\right) \!\right)\!+\!\frac{W_x^2\!+\!H^2\!+\!r_x^2}{2}-W_x\sqrt{r_x^2\!-\!H^2}-H\sqrt{r_x^2\!+\!W_x^2}\right]\!dr_x\!\bigg)\!,
\end{multline}
\normalsize
\hrulefill
\vspace{-0.2 in}
\end{figure*}
    \begin{equation}
        g_k(r_k) = {}_2F_1^{\,2}\!\left(-0.5,-0.5,1,|\rho(r_k)|^2\right), \notag
    \end{equation}
    \begin{equation}
        \!g_i(r_{\!i})\!=\!\rho^{2\!}(r_{\!i}){}_2F_1\!\!\left(\!-0.5,\!-0.5,\!1,\!|\rho(r_{\!i})|^2\right)\!{}_2F_1\!\!\left(\!0.5,\!0.5,\!2,\!|\rho(r_{\!i})|^2\right)\!\!,\! \notag
    \end{equation}
    with $r_k{=}\sqrt{\!(\!x_k\!{-}x_k')^2{+}(y_k\!{-}y_k')^2\!}$ and $r_{\!i}{=}\sqrt{\!(\!x_i{-}x_i')^2{+}(y_i{-}y_i')^2\!}$.
\end{proposition}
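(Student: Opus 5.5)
The plan is to write each second moment as a quadruple integral over pairs of points in the relevant partition. Independence of the UE--CRIS and CRIS--BS channels then factors the integrand into a function of the distance $r$ between the two points only. The remaining four-dimensional integral over the rectangle $W_x\times H$ collapses to a one-dimensional integral in $r$ against the distance density of two points in a rectangle.

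For $\gamma_{kk}$, squaring \eqref{eq:gammakk} gives
\begin{equation*}
\mathbb{E}[\gamma_{kk}^2]=\tfrac{P_k\eta_k}{\nu_k}\iint\!\!\iint \mathbb{E}\big[|h_{kk}(\mathbf{p})||h_{kk}(\mathbf{p}')|\big]\,\mathbb{E}\big[|g_k(\mathbf{p})||g_k(\mathbf{p}')|\big]\,d\mathbf{p}\,d\mathbf{p}',
\end{equation*}
with $\mathbf{p},\mathbf{p}'\in\mathcal{P}_k$.

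\textbf{Step 1: the correlation kernels.} For two unit-variance correlated complex Gaussians $a,b$ with correlation $\rho$, the classical bivariate Rayleigh result gives $\mathbb{E}[|a||b|]=\frac{\pi}{4}\,{}_2F_1(-\tfrac12,-\tfrac12;1;|\rho|^2)$. Applying this to both links yields the factor $\frac{\pi^2}{16}g_k(r)$.

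For $\gamma_{ki}$, expanding $|\gamma_{ki}|^2$ gives the integrand $\mathbb{E}[h_{ki}(\mathbf{p})h_{ki}^*(\mathbf{p}')]\,\mathbb{E}[e^{-j(\angle h_{ii}(\mathbf{p})-\angle h_{ii}(\mathbf{p}'))}]\,\mathbb{E}[|g_i(\mathbf{p})||g_i(\mathbf{p}')|]$. This factorization uses independence of $h_{ki}$ and $h_{ii}$ for different users, together with independence of the CRIS--BS link. The three factors are:
\begin{itemize}
\item $\mathbb{E}[h_{ki}(\mathbf{p})h_{ki}^*(\mathbf{p}')]=\rho$;
\item $\mathbb{E}[|g_i(\mathbf{p})||g_i(\mathbf{p}')|]=\frac{\pi}{4}{}_2F_1(-\tfrac12,-\tfrac12;1;|\rho|^2)$, as in Step~1;
\item the phase-difference expectation $\mathbb{E}[e^{j(\theta_1-\theta_2)}]$ for correlated complex Gaussians, which equals $\frac{\pi}{4}\rho\,{}_2F_1(\tfrac12,\tfrac12;2;|\rho|^2)$.
\end{itemize}
I would derive the last factor from the joint amplitude--phase density, or as a special case of $\mathbb{E}[|a|^m|b|^n e^{j(\cdot)}]$ hypergeometric moments. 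Combining the three produces $\frac{\pi^2}{16}g_i(r)$ with the $\rho^2$ prefactor. Isotropy makes $\rho$ depend only on $r$.

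\textbf{Step 2: reduction to a distance integral.} Once the integrand is $F(r)$, I use the standard identity $\iint\!\iint_{R\times R}F(|\mathbf{p}-\mathbf{p}'|)=\int_0^{\sqrt{W^2+H^2}}F(r)\,r\,K(r)\,dr$. Here $K(r)$ is the angular integral of the overlap area $(W-r|\cos\phi|)(H-r|\sin\phi|)$ over admissible angles, and equals $4\int(W-r\cos\phi)(H-r\sin\phi)\,d\phi$ over the first quadrant. The admissible angle range depends on the regime:
\begin{itemize}
\item $r<H$: the full quadrant, giving $\frac{\pi}{2}WH-(W+H)r+\frac{r^2}{2}$;
\item $H<r<W$: $\phi$ restricted to $\phi\le\sin^{-1}(H/r)$;
\item $r>W$: additionally $\phi\ge\cos^{-1}(W/r)$.
\end{itemize}
Evaluating the elementary integrals produces the $\sin^{-1}$, $\cos^{-1}$ and square-root terms. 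The third integral in \eqref{eq:Omegax} arises from writing the $r>W$ case as the $H<r$ expression minus the excluded wedge.

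\textbf{Main obstacle.} I expect the bookkeeping in Step~2 (the piecewise geometry and its sign conventions, implicitly assuming $W_x\ge H$) and the phase-correlation moment in Step~1 to be the main work. For the latter I would rely on known results for the bivariate Rayleigh/phase distribution rather than re-derive them from scratch. Collecting constants, the factor 4 from symmetry times $\frac{\pi^2}{16}$ gives the prefactor $\frac{P_x\eta_x\pi^2}{4\nu_x}$.
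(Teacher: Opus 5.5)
Your proposal is correct and is essentially the paper's proof. It uses the same quadruple-integral expansion with the independence-based factorization, and the same kernels: the paper takes $\frac{\pi}{4}\,{}_2F_1(-\frac{1}{2},-\frac{1}{2};1;|\rho|^2)$ and the phase-difference moment $\frac{\pi}{4}\rho\,{}_2F_1(\frac{1}{2},\frac{1}{2};2;|\rho|^2)$ from Miller's Eqs.~(4.19) and (4.26). It then makes the same collapse to a one-dimensional distance integral, which the paper imports from an earlier appendix whereas you derive it via the overlap-area kernel. Two side notes. First, doing your Step~2 explicitly will expose typos in the printed $\Omega(x)$: the third integral should start at $W_x$ rather than $H$, and $\sqrt{r_x^2-W_x^2}$ and $W_x r_x$ should replace $\sqrt{r_x^2+W_x^2}$ and $W r_x$. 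Second, as the paper's own intermediate expression shows, the prefactor of $\mathbb{E}[|\gamma_{ki}|^2]$ is $P_k\eta_k/\nu_k$, so ``$\Omega(i)$'' should be read with only the width $W_i$ and kernel $g_i$ indexed by $i$.
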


\renewcommand\qedsymbol{$\blacksquare$}

\begin{proof}
Substituting $\gamma_{kk}$ with (\ref{eq:gammakk}), $\mathbb{E}[\gamma_{kk}^2]$ can be expressed as
\begin{multline}
    \mathbb{E}[\gamma_{kk}^2] = \tfrac{P_k\eta_k}{\nu_k}\!\!\int\nolimits
    _{\substack{(x_k,y_k) \\ \in\mathcal{P}_k}} \int\nolimits
    _{\substack{(x_k',y_k') \\ \in\mathcal{P}_k}}\!\mathbb{E}[|h_{kk}(x_k,\!y_k)||h_{kk}(x_k',\!y_k')|] \\ \times\mathbb{E}[|g_k(x_k,y_k)||g_k(x_k',y_k')|]dx_kdy_kdx_k'dy_k'. \notag
\end{multline}
Both expectations involve products of identically distributed correlated Rayleigh random variables with the same correlation function. Therefore, applying \cite[Eq. (4.19)]{miller_complex_1974} yields
\begin{equation}
    \mathbb{E}[\gamma_{kk}^2]{=}\tfrac{\!P_k\eta_k\pi^2}{16\nu_k}\!\!\!\int\nolimits
    _{\substack{(x_k,y_k) \\ \in\mathcal{P}_k}}\! \int\nolimits
    _{\substack{(x_k',y_k') \\ \in\mathcal{P}_k}}\!g_{k}(\!x_{\!k},\!y_k,\!x_{k}',\!y_k'\!)dx_{\!k}dy_{k}dx_{k}'dy_k'.\!\! \label{eq:Egammakk2_appen1}
\end{equation}
with
\begin{equation*}
    g_k(x_k,y_k,x_k',y_k')={}_2F_1^2(-0.5,-0.5,1,|\rho(x_k,y_k,x_k',y_k')|^2).
\end{equation*}
Substituting $\gamma_{ki}$ with (\ref{gamma_ki}), $\mathbb{E}[|\gamma_{ki}|^2]$ can be expressed as
\begin{multline*}
\mathbb{E}[|\gamma_{ki}|^2] = \tfrac{P_k\eta_k}{\nu_k} \! \int\nolimits_{\substack{(x_i,y_i) \\ \in \mathcal{P}_i}} \! \int\nolimits_{\substack{(x_i',y_i') \\ \in \mathcal{P}_i}} \! \mathbb{E} \! \left[ \mathrm{e}^{j\left(\!\angle\!h_{ii}\!(\!x_{i\!}',y_{i\!}'){-}\!\angle\!h_{ii}\!(\!x_i,y_i\!)\!\right)}\!\right]  \\
\times\mathbb{E}[h_{ki}(\!x_{i},\!y_{i\!})h^*_{ki}(\!x_{i},\!y_{i\!})]\mathbb{E}[|g_{i\!}(\!x_{i},\!y_{i\!})||g_{i\!}(\!x_i',\!y_i')|]dx_idy_idx_i'dy_i'.\!
\end{multline*}
As it is the product of two correlated random variables, $\mathbb{E}[h_{ki}(x_{i},y_{i})h^*_{ki}(x_{i},y_{i})] {=} \rho(x_i,y_i,x_i',y_i')$. For $\mathbb{E}[\gamma_{kk}^2]$, $\mathbb{E}[|g_{i}(x_{i},y_{i})||g_{i}(x_i',y_i')|]$ can be found the same way as $\mathbb{E}[|g_k(x_k,y_k)||g_k(x_k',y_k')|]$, whereas \cite[Eq.~(4.26)] {miller_complex_1974} gives $\mathbb{E}\big[\mathrm{e}^{j(\angle h_{ii}(x_{i}',y_{i}'){-}\angle h_{ii}(x_i,y_i))}\big]$. Therefore,
\begin{equation}
    \mathbb{E}[|\gamma_{ki}|^2]{=}\tfrac{\!P_k\eta_k\pi^2}{16\nu_k}\!\!\int\nolimits
    _{\substack{(x_i,y_i) \\ \in\mathcal{P}_i}}\!\int\nolimits
    _{\substack{(x_i',y_i') \\ \in\mathcal{P}_i}}\!g_{i}(\!x_{\!i},\!y_i,\!x_{\!i}',\!y_i')dx_{\!i}dy_{i}dx_{\!i}'dy_i', \label{eq:absgammaki}
\end{equation} 
where
\begin{multline}
g_i(x_{i},y_{i},x_{i}',y_{i}')={}_2F_1\!\left(0.5, 0.5, 2, |\rho(x_{i}, y_{i},x_{i}', y_{i}')|^2\right) \\ \times {}_2F_1\!\left(-0.5, -0.5, 1, |\rho(x_i,y_i,x_i',y_i')|^2\right)\!\rho^2(x_{i},y_{i},x_{i}',y_{i}').\notag 
\end{multline}
Both \eqref{eq:Egammakk2_appen1} and \eqref{eq:absgammaki} are integrals of the same general form over different subsurfaces. This type of integral is solved in \cite[Appendix C]{inwood_continuous_2025}. As isotropic correlation is assumed, the quadruple integrals can be transformed into a sum of single integrals over the distance between pairs of points, $r$, resulting in  $\Omega(x)$ in \eqref{eq:Omegax}.  Substituting $k$ and $i$ as the argument $x$ gives the final result for $\mathbb{E}[\gamma_{kk}^2]$ and $\mathbb{E}[|\gamma_{ki}|^2]$, respectively.
\end{proof}

\section{PA and RIS area splitting-based optimization scheme for CRIS-NOMA}  \label{analys_opt}

In this section, we use the expressions derived in Sec. \ref{stat_model} to obtain analytical expressions for the \gls{BER} of each \gls{UE}. We use these \gls{BER} expressions to optimize the transmit powers, $P_k$, and the \gls{CRIS} partition widths, $W_k$, to minimize the average \gls{BER} of all \glspl{UE} at the \gls{BS}.\footnote{The extension to the multi-antenna BS case follows the same design and BER analysis as the single-antenna scenario. The key difference is the introduction of a BS receive combiner, which can be jointly optimized with the \gls{CRIS} phase shifts via alternating maximization. The resulting, empirically characterized, effective channel statistics can then be used in the same BER-based power allocation procedure as in the single-antenna case.} From \cite{alaaeldin3}, the unconditional average \gls{BER} of $U_k$ for channel-aligned \gls{UL} \gls{NOMA} is
\begin{equation}  \label{ber_express}
\mathrm{BER}_{U_k} {=}\! \sum\nolimits_{q=1}^{N_k} \!\frac{c_{q}}{2} {+} \frac{c_{q}}{ \pi} \!\int_0^{\infty} \!\!\!\Re \!\bigg(\!\! \frac{j \mathrm{e}^{ -z^2/2 } }{z}  \Phi_{X_q}\!\bigg( \frac{z}{\sigma_n} \bigg) \!\!\bigg) \! dz,
\end{equation}
where $N_k$ is the total number of $Q(\cdot)$ functions in \eqref{ber_express}, $a_{kq}$ and $c_{q}$ are constants that depend on $K$ and $M_k$, and $\Phi_{X_q} (z) = \prod\nolimits_{k = 1}^K \Phi_{h^{\mathrm{eff}}_k} (a_{kq} z)$.
The proposed optimization strategies aim to eliminate the \gls{BER} floors inherent in  \gls{SIC}-based \gls{UL} \gls{PD}-\gls{NOMA} systems. Using (\ref{ber_express}), we derive optimized values of $P_k$ and $W_k$ for each \gls{UE} that minimize the average (sum) BER\footnote{Note that we minimize the average (sum) of the \glspl{BER} of the \glspl{UE} to ensure that no \gls{UE}'s \gls{BER} remains high, as a single large \gls{BER} would dominate the sum. This formulation is simpler than minimizing the maximum \gls{BER}, however, we compare against the min-max solution in the results in Sec. \ref{Sim}.} of all users while satisfying individual \gls{UL} transmit power constraints. The optimization is more stable and accurate when expressed in the log–log domain, where both $P_k$ and the cost function are expressed in dB. This transformation leads to a smoother objective function and faster convergence with gradient descent-based methods. The UL PA problem is thus
\begin{subequations} \label{opt_prob}
\begin{align}
&\!\!\!\min_{P_k, W_k}\!\! 10 \log_{10}\!\! \bigg( \sum_{k=1}^K \mathrm{BER}_{U_k}\! \Big(\!\! 10^{\frac{P_1}{10}}\!,\! \cdots\!,\! 10^{\frac{P_K}{10}}\!,\! W_1,\! \cdots\!,\! W_{\!K}\! \!\Big)\!\!\! \bigg)\!,\!\! \label{cost_fun} \\ 
&\!{\mathrm {s.t.}} \,\, P_k  \leq  P_{\mathrm{dBm}}^{\mathrm{max}}, \qquad \forall k, \\
&\quad\,\,\, 0 \leq W_k  \leq  W, \qquad \forall k, \\
& \quad\,\,\sum\nolimits_{k=1}^K W_k = W,
\end{align}
\end{subequations}
where $P_{\mathrm{dBm}}^{\mathrm{max}}$ is the maximum available \gls{UL} transmit power in dBm and $\mathrm{BER}_{U_k}$ is given in (\ref{ber_express}). The optimization parameters $P_k$ and $W_k$ in (\ref{cost_fun}) come from $\Phi_{X_q}(\cdot)$ in (\ref{ber_express}), which is a function of $\Phi_{h^{\mathrm{eff}}_k}(\cdot)$ which in turn is a function of $\Phi_{\gamma_{kk}}(\cdot)$ and $\Phi_{\Re(\gamma_{ki})}(\cdot)$ in (\ref{cf_eff_ch}), and both are functions of $P_k$ and $W_k$.

To handle the constraints, (\ref{opt_prob}) is transformed into an unconstrained optimization problem by constructing the Lagrangian,
\begin{multline}  \label{largrange}
L(\mathbf{p}, \mathbf{w}, \boldsymbol{\xi}, \boldsymbol{\delta}, \omega) = f(\mathbf{p}) + \sum\nolimits_{k=1}^K \xi_k \max(0, P_k {-} P_{dBm}^{\mathrm{max}}) \\
+\!\!\sum\nolimits_{k=1}^K \!\delta_k\! \max(0, W_k {-} W)\!+\!\omega \Big(\! \sum\nolimits_{k=1}^K\!\! W_k\!-\!W\!\Big)^2\!, \!\!\! 
\end{multline}
where $\boldsymbol{\xi}=[\xi_1,\dots,\xi_K]^T$, $\boldsymbol{\delta}=[\delta_1,\dots,\delta_K]^T$ and $\omega$ are the Lagrange multipliers, $f(\cdot)$ is the cost function in (\ref{opt_prob}), $\mathbf{p} = [P_1, \dots, P_K]^T$ and $\mathbf{w} = [W_1, \dots, W_K]^T$. The optimization problem is solved iteratively by setting low initial values for the penalty weights $\boldsymbol{\xi}$, $\boldsymbol{\delta}$ and $\omega$. Next, the Lagrangian in (\ref{largrange}) is minimized in each iteration using a gradient descent-based method as it is continuous and has continuous first derivatives. The penalty weights are increased in each iteration if the corresponding constraints are violated. After multiple iterations, the algorithm converges to the minimum weights at which \eqref{cost_fun} is minimized while the constraints are satisfied.

The computational complexity of problem (\ref{opt_prob}) is dominated by the cost of the gradient evaluation of the Lagrangian in (\ref{largrange}). The Lagrangian gradient is approximated via forward finite differences, requiring $\mathcal{O}(K)$ additional function evaluations per iteration. The cost of one Lagrangian evaluation $C_L$ is dominated by the computation of $\sum_{k=1}^K \mathrm{BER}_{U_k}$. With $I$ inner and $J$ outer iterations, the computational complexity of the proposed algorithm is $\mathcal{O}(J I K C_L)$.

The complexity of computing $C_L$ is dominated by calculating the \gls{BER} expression in (\ref{ber_express}). To reduce the cost, the Gaussian hypergeometric function used to calculate the second moments of $\gamma_{kk}$ and $\gamma_{ki}$ is approximated using the polynomial series expansion in \cite[Eq. (15.2.1)]{nist_digital_2026} truncated to $L$ terms. This expansion is very accurate when $|z| < 1$, which is true here as $z$ is the squared isotropic correlation. Thus, a value such as $L = 6$ is sufficient for a highly accurate approximation.


\section{Simulation Results} \label{Sim}

Figures \ref{2-users} and \ref{3-users} plot the average \gls{BER} of each \gls{UE} against the transmit power for a 2 and 3 \gls{UE} system, respectively. The noise variance is $\sigma_n^2 {=} 10^{-9}$ mW \cite{Qingqing2}, the path loss exponent is $\psi {=} 2.2$, $d_\mathrm{rb} {=} 30$ m, $d_{\mathrm{ur},1} {=} 20$ m and $d_{\mathrm{ur},3} {=} 220$ m. In Fig. \ref{2-users}, $d_{\mathrm{ur},2} {=} 50$ m and in Fig. \ref{3-users}, $d_{\mathrm{ur},2} {=} 70$ m. Due to its close proximity to the \gls{CRIS}, $U_1$ employs 64-QAM to increase throughput, while $U_2$ and $U_3$ employ 16-QAM. The \gls{CRIS} height is fixed at $H {=} 5\lambda$, where $\lambda$ is the signal wavelength, while the width, $W$, varies across each figure. The carrier frequency used in the simulations is $f {=} 28$ GHz. The sinc correlation model is used, such that $\rho(x_i,y_i,x_k,y_k) = \mathrm{sinc}(\tfrac{2}{\lambda} \sqrt{(x_i - x_k)^2 + (y_i - y_k)^2})$. We compare the performance of the \gls{CRIS} with a \gls{DRIS} with an element spacing of $\lambda/2$, and having the same surface area for a fair comparison.

Each sub-figure compares four methods for a single \gls{UE}: the joint optimization approach in \eqref{opt_prob} (JO), optimization only over \gls{CRIS} area partitions (AO), no optimization (NO), and a time-slotted \gls{OMA} baseline. In scenarios where the \gls{CRIS} partition size or transmit power is not optimized, the resource is shared equally among \glspl{UE}. The four methods are applied and compared under different \glspl{CRIS} sizes in each figure. Figure \ref{2-users} also shows results for the minimization of the maximum \gls{BER} (MM), as a comparison to minimizing the average \gls{BER} in (\ref{cost_fun}).

For a fair comparison, the modulation order of $U_k$ in the \gls{OMA} system is chosen as
$M_k^{\textnormal{OMA}} = \left(M_k^{\textnormal{NOMA}}\right)^K$, such that the number of bits per symbol for \gls{OMA} is that of \gls{NOMA} multiplied by $K$. This ensures that both systems achieve the same overall bit rate. Moreover, the transmit power of $U_k$ in the \gls{OMA} system is scaled by a factor of $K$ so that the average transmit power is identical in \gls{OMA} and \gls{NOMA}. To maintain fairness in terms of channel estimation overhead, in the \gls{OMA} scenario, we assume that the \gls{BS} only has knowledge of $\mathcal{P}_k$ for $U_k$, as in the \gls{NOMA} case, rather than of the entire \gls{CRIS}.

\begin{figure}[t]
\centering
\includegraphics[width=0.9\columnwidth]{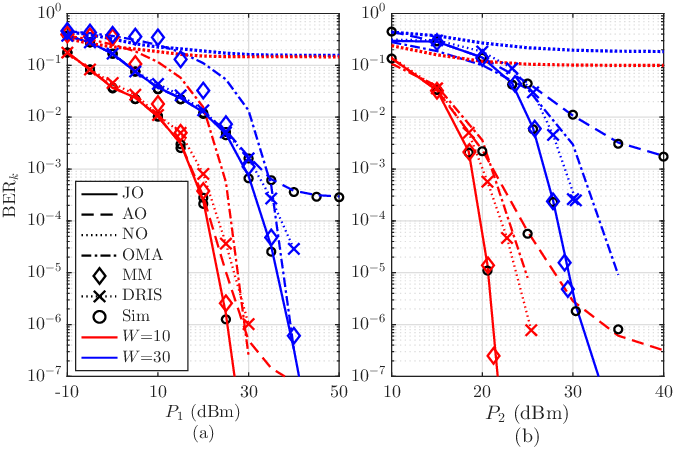}
\vspace{-1em}
\caption{\glspl{BER} for two-user systems: (a) $U_1$, (b) $U_2$.}
\label{2-users}
\vspace{-1em}
\end{figure}

\begin{figure*}
\centering
\includegraphics[width=1.6\columnwidth]{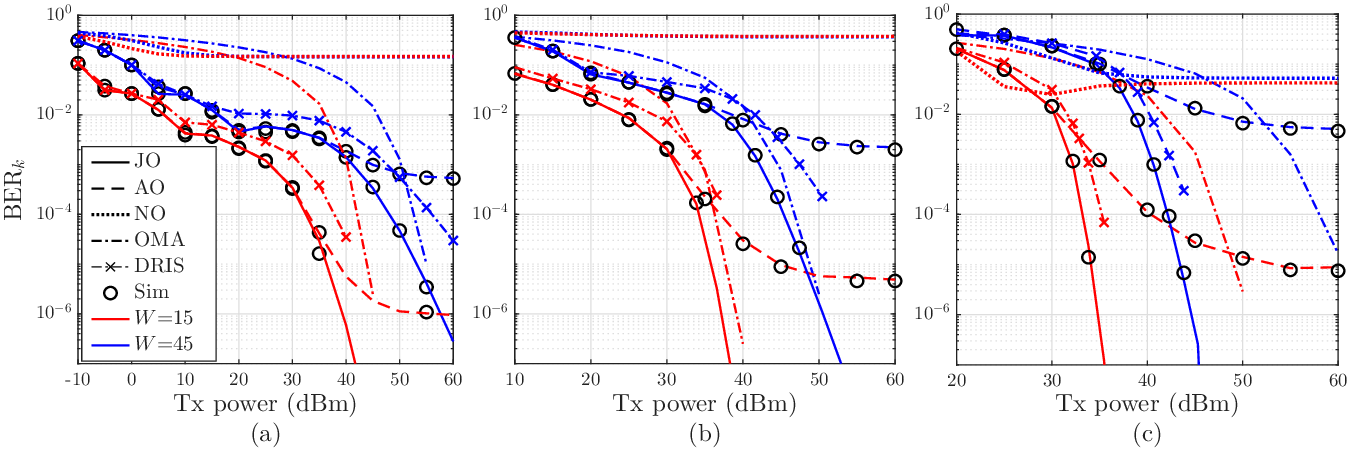}
\vspace{-0.5em}
\caption{\glspl{BER} for three-user systems: (a) $U_1$, (b) $U_2$, (c) $U_3$.}
\vspace{-0.2 in}
\label{3-users}
\end{figure*}

Figures \ref{2-users} and \ref{3-users} demonstrate excellent agreement between the analytical results using \eqref{ber_express} and simulations. This confirms the accuracy of our approximation of $\Phi_{h^{\mathrm{eff}}_{k}}(z)$, which enables the tractable \gls{BER} expression required for optimization.

Without optimization, \gls{NOMA} exhibits persistent error floors due to employing \gls{SIC}. Residual errors from \gls{SIC} persist even as the transmit power increases and noise impact diminishes, resulting in the observed error floor. This floor exceeds 0.1 for the strongest \gls{UE}, meaning over 1 in 10 bits are incorrectly detected, highlighting the need for optimization for reliable operation. Equal \gls{PA} overlooks variations in channel quality among \glspl{UE}, reducing the effectiveness of \gls{SIC}. Increasing the \gls{CRIS} size without proper optimization yields limited \gls{BER} improvement, as \gls{IUI} remains significant. Optimizing the \gls{CRIS} partitions while maintaining equal transmit \gls{PA} among \glspl{UE} leads to significant improvements in \gls{BER} performance. This gain is especially evident for larger \gls{CRIS} sizes and for stronger \glspl{UE} as they are assigned larger sections of the \gls{CRIS} since they are first decoded in the \gls{SIC} order. However, the error floor is not completely eliminated. This occurs because, at high \gls{SNR} levels, the optimization algorithm allocates larger \gls{CRIS} partitions to the stronger \gls{UE} and smaller \gls{CRIS} partitions to the weaker \glspl{UE} to suppress \gls{IUI} and mitigate the error floor. However, the smaller partition sizes reduce channel hardening for the weaker \glspl{UE}, and this increased variability degrades its \gls{BER} performance.  The error floor emerges when further improvements in the \gls{BER} of the stronger \glspl{UE} come at the expense of equivalent \gls{BER} degradation of the weaker \glspl{UE}, limiting the overall performance.

The joint optimization of the \gls{CRIS} partition and \gls{PA} completely removes the error floor for \glspl{BER} of at least $10^{-7}$. Optimizing \gls{PA} complements \gls{CRIS} partitioning by enabling interference reduction without sacrificing channel hardening. PA manages \gls{IUI} independently in the \gls{PD}, enabling interference reduction without reducing \gls{CRIS} resources. This approach preserves or even enhances channel hardening while simultaneously improving \gls{BER} performance. Figure \ref{2-users} also shows that minimizing the maximum \gls{BER} and minimizing the average \gls{BER} in (\ref{cost_fun}) converge to the same result at high \gls{SNR}, but (\ref{cost_fun}) provides better performance in the low \gls{SNR} region. The superiority of the \gls{CRIS} over a \gls{DRIS} is also clear.

Optimized \gls{NOMA} outperforms \gls{OMA} in all scenarios studied by enabling multiple \glspl{UE} to simultaneously access the full resource block without incurring excessive \gls{IUI}. The extra resource per user allows \gls{NOMA} to use lower-order modulation schemes, reducing the likelihood of symbol errors. The JO of \gls{CRIS} partitioning and \gls{PA} is very effective at managing \gls{IUI}, allowing the benefits of increased resource sharing to outweigh interference and provide superior \gls{BER} performance.

\section{Conclusions}  \label{Conc}
This paper investigated a \gls{CRIS}-enabled \gls{UL} \gls{NOMA} system under spatially correlated fading. We analyzed the effective channels and derived \gls{BER} expressions using a \gls{CF} approximation. A joint \gls{PA} and \gls{CRIS} partitioning optimization scheme was developed to minimize the average \gls{BER} and mitigate the BER floors caused by \gls{IUI}. Simulation results confirmed the accuracy of the analysis and the superiority of the optimized \gls{CRIS}-\gls{NOMA} scheme over both a traditional \gls{OMA} system, and a \gls{NOMA} system involving a \gls{DRIS}, highlighting the potential of combining \gls{CRIS} and \gls{NOMA} in future networks.

\end{document}